
\relax
\documentclass[letterpaper]{article}
\usepackage{aaai17}
\usepackage{times}
\usepackage{helvet}
\usepackage{courier}
\frenchspacing
\setlength{\pdfpagewidth}{8.5in}
\setlength{\pdfpageheight}{11in}
\usepackage{setspace}
\usepackage[british]{babel}
\usepackage{amsmath}
\usepackage{amssymb,bbm}
\usepackage{amsthm}
\usepackage{graphicx}%
\usepackage{epstopdf}
\usepackage{color}
\usepackage{algpseudocode}
\usepackage{algorithm}

\def\BState{\State\hskip-\ALG@thistlm}
\makeatletter  
 \renewcommand{\ALG@name}{Mechanism} 
\makeatother   
\newtheorem{theorem}{Theorem}
\newtheorem{Lemma}[theorem]{Lemma}
\newtheorem{assumption}[theorem]{Assumption}

\newtheorem{definition}[theorem]{Definition}

\ifodd 0
\newcommand{\rev}[1]{#1}
\newcommand{\frev}[1]{{\color{magenta}#1}}
\newcommand{\com}[1]{\textbf{\color{red}(COMMENT: #1)}} 
\newcommand{\clar}[1]{\textbf{\color{green}(NEED CLARIFICATION: #1)}}
\newcommand{\response}[1]{\textbf{\color{magenta}(RESPONSE: #1)}} 
\else
\newcommand{\rev}[1]{#1}
\newcommand{\frev}[1]{#1}
\newcommand{\com}[1]{}
\newcommand{\clar}[1]{}
\newcommand{\response}[1]{}
\fi
\newcommand{\RNum}[1]{\uppercase\expandafter{\romannumeral #1\relax}}

\newcommand{\squishlist}{
   \begin{list}{$\bullet$}
    { \setlength{\itemsep}{0pt}      \setlength{\parsep}{3pt}
      \setlength{\topsep}{3pt}       \setlength{\partopsep}{0pt}
      \setlength{\leftmargin}{1.5em} \setlength{\labelwidth}{1em}
      \setlength{\labelsep}{0.5em} } }
\newcommand{\squishend}{  \end{list}  }

\usepackage{times}
\pdfinfo{
/Title (Sequential Peer Prediction: Learning to Elicit Effort using Posted Prices)
/Author (Yang Liu and Yiling Chen) }

\begin{document}
\title{Sequential Peer Prediction: Learning to Elicit Effort using Posted Prices}

\author{Yang Liu and Yiling Chen \\ 
Harvard University, Cambridge MA, USA\\
\{yangl,yiling\}@seas.harvard.edu}
\maketitle
\begin{abstract}
Peer prediction mechanisms are often adopted to elicit truthful contributions from crowd workers when no ground-truth verification is available. Recently, mechanisms of this type have been developed to incentivize effort exertion, in addition to truthful elicitation. In this paper, we study a sequential peer prediction problem where a data requester wants to dynamically determine the reward level to optimize the trade-off between the quality of information elicited from workers and the total expected payment. In this problem, workers have homogeneous expertise and heterogeneous cost for exerting effort, both unknown to the requester. We propose a sequential posted-price mechanism to dynamically learn the optimal reward level from workers' contributions and to incentivize effort exertion and truthful reporting. We show that (1) in our mechanism, workers exerting effort according to a non-degenerate threshold policy and then reporting truthfully is an equilibrium that returns highest utility for every worker, and (2) The regret of our learning mechanism w.r.t. offering the optimal reward (price) is upper bounded by $\tilde{O}(T^{3/4})$ where $T$ is the learning horizon. We further show the power of our learning approach when the reports of workers do not necessarily follow the game-theoretic equilibrium. 

\end{abstract}

\section{Introduction}\label{sec:intro}
Crowdsourcing has arisen as a promising option to facilitate machine learning via eliciting useful information from human workers. For example, such a notion has been widely used for labeling training samples, e.g., Amazon Mechanical Turk. Despite its simplicity and popularity, one salient feature or challenge of crowdsourcing is the lack of evaluation for the collected answers, because ground-truth labels often are either unavailable or too costly to obtain. This problem is called {\em information elicitation without verification} (IEWV)~\cite{Waggoner:14}. A class of mechanisms, collectively called {\em peer prediction}, has been developed for the IEWV problem~\cite{Prelec:2004,MRZ:2005,jurca2007collusion,jurca2009mechanisms,witkowski2012robust,witkowski2012peer,radanovic2013}.  In peer prediction, an agent is rewarded according to how his answer compares with those of his peers and the reward rules are designed so that everyone truthfully reporting their information is a game-theoretic equilibrium. 


More recent work \cite{Witkowski_hcomp13,dasgupta2013crowdsourced,2016arXiv160303151S} on peer prediction concerns \emph{effort elicitation}, where the goal is not only to induce truthful report, but also to induce high quality answers by incentivizing agents to exert effort. In such work, the mechanism designer is assumed to know workers' expertise level and their cost for effort exertion and designs reward rules to induce optimal effort levels and truthful reporting at an equilibrium. 

This paper also focuses on the effort elicitation of peer prediction. But different from prior work, our mechanism designer knows neither workers' expertise level nor their cost for effort exertion. We introduce a \emph{sequential peer prediction} problem, where the mechanism proceeds in rounds and the mechanism designer wants to learn to set the optimal reward level (that balances the amount of effort elicited and the total payment) while observing the elicited answers in previous rounds. There are several challenges to this problem. First, effort exertion is not observable and no ground-truth answers are available for evaluating contributions. Hence, it is not immediately clear what information the mechanism designer can learn from the observed answers in a sequential mechanism. Second, forward-looking workers may have incentives to mislead the learning process, hoping for better future returns.

The main contributions of this paper are the following: (1) We propose a \emph{sequential peer prediction} mechanism by combining ideas from peer prediction with multi-armed bandit learning \cite{LR85,Auer:2002:FAM:599614.599677}. (2) In this mechanism, workers exerting effort according to a non-degenerate threshold policy and then reporting truthfully in each round is an equilibrium that returns highest utility for every worker. 
(3) We show that the regret of this mechanism w.r.t. offering the optimal reward is upper bounded by $\tilde{O}(T^{3/4})$ where $T$ is the learning horizon. We also show that under a ``mean-field'' assumption, the sequential learning mechanism can be extended to a setting where workers may not be fully rational. (4) Our sequential peer prediction mechanism is {\em minimal} in that reported labels are the only information we need from workers. 

In the rest of the paper, we first survey the most related work in Section \ref{sec:related}. Section \ref{sec:formulate} introduces our problem formulation. We then present a game-theoretic analysis of worker behavior in a one-stage static setting in Section \ref{sec:static}. Based on the equilibrium analysis of the one-stage setting, we propose and analyze a learning mechanism to learn the optimal bonus level using posted price in Section \ref{sec:learn}. We also discuss an extension of our learning mechanism to a setting where workers may not be fully rational. Section \ref{sec:conclude} concludes this paper. All omitted details can be found in the full version of the paper \cite{fullversion_aaai17}. 

\subsection{Related work}\label{sec:related}


Eliciting high-quality data from effort-sensitive workers hasn't been addressed within the literature of peer prediction until recently. 
\citeauthor{Witkowski_hcomp13} [\citeyear{Witkowski_hcomp13}]  and \citeauthor{dasgupta2013crowdsourced} [\citeyear{dasgupta2013crowdsourced}] formally introduced costly effort into models of IEWV. The costs for effort exertion were assumed to be homogeneous and known and static, one-shot mechanisms were developed for effort elicitation and truthful reporting. Our setting allows participants to have heterogeneous cost of effort exertion drawn from a common unknown distribution and hence we consider a sequential setting that enables learning over time.  \citeauthor{fullversion} [\citeyear{fullversion}]  is the closest to this work. It considered the same general setting and partially resolved the problem of learning the optimal reward level sequentially. There are two major differences however. First, the method developed in \citeauthor{fullversion} [\citeyear{fullversion}] required workers to report their private cost in addition to their answer, which is arguably undesirable for practical applications. 
Our learning mechanism in contrast is "minimal" \cite{segal2007communication,Witkowski_ec13} and only asks for answers (for tasks) from workers. 
Second, the mechanism of \citeauthor{fullversion} [\citeyear{fullversion}] was built upon the output agreement mechanism as the single-round mechanism. Output agreement and hence the mechanism of \citeauthor{fullversion} [\citeyear{fullversion}] suffer from potential, simple collusions of workers: colluding by reporting an uninformative signal will lead to a better equilibrium (higher utility) for workers. By building upon the mechanism of \citeauthor{dasgupta2013crowdsourced} [\citeyear{dasgupta2013crowdsourced}], which as a one-shot mechanism is resistant to such simple collusion, we develop a collusion-resistant sequential learning mechanism. 

\com{Minor: frongillo is not a good citation for minimal mechanisms. May want to consider citing a paper about minimal mechanisms in general, not just minimal peer prediction mechanism.}


Generally speaking, when there is a lack of knowledge of agents, the design problem needs to incorporate learning from prior outputs from running the mechanism -- see \citeauthor{chawla2014mechanism} [\citeyear{chawla2014mechanism}] for specific examples on learning with auction data. And this particular topic has also been studied within the domain of crowdsourcing. For example, \frev{\citeauthor{roth2012conducting} [\citeyear{roth2012conducting}] and \citeauthor{abernethy2015actively} [\citeyear{abernethy2015actively}] consider strategic data acquisition for estimating the mean and for online learning respectively. } \frev{ Our problem differs from above in that both agents' action (effort exertion) and ground-truth outcomes are unavailable.}


\section{Problem Formulation}\label{sec:formulate}


\subsection{Formulation and settings}

Suppose in our system we have one data requester (or a mechanism designer), and there are $N$ candidate workers denoted by $\mathcal C = \{1,2,...,N\}$, where $N \geq 4$. In all we have $N+1$ interactive agents. The data requester has binary-answer tasks, with answer space $\{-1,+1\}$, that she'd like to get labels for. The requester assigns tasks to workers.

Label generated by worker $i \in \mathcal C$ comes from a distribution that depends both on the ground-truth label and an effort variable $e_i$. Suppose there are two effort levels, High and Low, that a worker can potentially choose from: $e_i \in \{H,L\}$. We model the cost $c$ for exerting High effort for each (worker,~task) pair as drawn from a distribution with c.d.f. $F(c)$ on a bounded support $[0, c_{\max}]$; while exerting Low effort incurs no cost. We assume such costs are drawn in an i.i.d. fashion. 
Denote worker $i$'s probability of observing $s' \in \{-1,+1\}$ when the ground-truth label is $s \in \{-1,+1\}$ as $p_{i,e_i} = \Pr(s'=s|s,e_i)$, under effort level $e_i$. 
Note with above we have assumed that the labeling accuracy is symmetric, and is independent of the ground-truth label $s$. Further for simplicity of analysis, we will assume all workers have the same set of $p_{i,H}, p_{i,L}$, denoting as $p_H, p_L$. With higher effort, the expertise level is higher: $1 \geq p_H >p_L \geq 0.5$ -- we also assume the labeling accuracy is no less than 0.5. 
\frev{The above are common knowledge among workers, while the mechanism designer doesn't know the form of $F(\cdot)$; neither does she know $p_H,p_L$. But we assume the mechanism designer knows the structural information, such as costs are i.i.d., workers are effort sensitive, and there are two effort levels etc.} 
\com{It's a little unclear what the mechanism designer knows here. I think the mechanism designer needs to know the structure, but not the exact distribution and probabilities. For example, the mechanism designer doesn't know the form of $F$, but needs to know that the cost is drawn i.i.d. from it. Is it right? May want to be a little more precise here.}

\rev{The goal of the learner is to design a sequential peer prediction mechanism for effort elicitation via observing contributed data from workers, such that the mechanism will help the learner converge to making the optimal action (will be defined later). }


\subsection{Reward mechanism}
Once assigned a task, worker $i$ has a guaranteed base payment $b>0$ for each task he completes. In addition to the base payment, the worker receives a reward $B_i (k)$ for task $k$ that he has provided an answer for. The reward is determined using the mechanism proposed by \citeauthor{dasgupta2013crowdsourced} [\citeyear{dasgupta2013crowdsourced}], where in this paper, we denote this specific peer prediction mechanism for effort elicitation as (\texttt{DG13}). In this mechanism, for each (worker, task) pair $(i,k)$, first a reference worker $j \neq i$ is selected randomly from the set of workers who are also assigned task $k$. \rev{Suppose any pair of such workers have been assigned $d$ other distinct tasks $\{i_1,...,i_d\}$ and $\{j_1,...,j_d\}$ respectively. Then the mechanism pays $B'_i(k)$ to worker $i$  on task $k$ in the following way: the payment consists of two terms; one term that rewards agreement on task $k$, and another that penalizes on uninformative agreement on other tasks: 
\begin{align}
B'_i(k) = 1&\biggl (L_i(k)=L_j(k)\biggr) - L^d_i \cdot L^d_j -\overline{L}^d_i \cdot \overline{L}^d_i,~\label{bonus:index}
\end{align}
where we denote reports from worker $i$ on task $n$ as $L_i(n)$ and $L^d_i := \sum_{n=1}^{d}(1+ L_i(i_n))/(2d), \overline{L}^d_i=1-L^d_i$. Our bonus rule follows exactly the same idea except that we will multiply $B'_i(k)$ by a constant $B \in [0,\bar{B}]$ (which we can choose): $B_i(k) := B\cdot B'_i(k)$.
}
\com{I changed the reward in DG13 to be $B'_i(k)$ to avoid overloading the notation $B_i(k)$.}

\paragraph{Task assignment:} We'd like all workers to work on the same number of tasks, all tasks are assigned to the same number of workers and any pair of workers have distinct tasks. In particular, each worker is assigned $M>1$ tasks -- denote the set of tasks assigned to worker $i$ as $\mathcal T_i: |\mathcal T_i|=M$. This is to simplify the computation of workers' utility and payment functions. Each task is assigned to at least two workers. For any pair of workers who has been assigned a common task, they have at least $1 \leq d < M$ distinct tasks. 
These are to ensure that the (\texttt{DG13}) mechanism is applicable. 
We also set the number of assignments for each task to be the same -- denote this number as $1 \leq K < N$, so that when we evaluate the accuracy of aggregated labels later, all tasks receive the same level of accuracy. But note we do not assign all tasks to all workers, i.e., $K \neq N$. 
Suppose each assigned task $k$ appears in $D \leq d$ tasks' distinct set for each worker. The described assignment can be achieved by assigning $N$ different tasks in each round. For more details on assignments please refer to our full version.

%


\subsection{Worker model}


After receiving each task $k$, worker $i$ first realizes the cost $c_i(k)$ for exerting High effort. Then worker $i$ decides his effort level $e_i(k) \in \{H,L\}$ and observes a signal $L_i(k)$ (label of the task). Worker $i$ can decide either to truthfully report his observation $r_i(k) = 1$ (denote by $r_i(k)$ the decision variable on reporting) or to revert the answer $r_i(k) = 0$:
 \[ L^r_i(k) =\left\{
                \begin{array}{ll}
                 L_i(k), ~\text{if}~r_i(k)=1\\
                 -L_i(k), ~\text{if}~r_i(k)=0
                \end{array}
              \right.
 \]

Workers are utility maximizers.
Denote the utility function at each time (or step) for each worker as $u_i$, which is assumed to have the following form (payment $-$ cost):
\begin{align*}
u_i = Mb + \sum_{k \in \mathcal T_i} B_i(k) - \sum_{k \in \mathcal T_i} c_i(k),~\forall i.
\end{align*}

\subsection{Data requester model}

After collecting labels for each task, the data requester will aggregate labels via majority voting. Denote workers who labeled task $k$ as $w_k(1),...,w_k(K)$. Then the aggregate label for $k$ is given by 
$$
L^A(k) = 1\biggl(\sum_{n=1}^K L^r_{w_k(n)}(k)/K > 0\biggr) \cdot 2-1.~
$$

\com{I think the indicator function gives value 0 or 1, but our labels are -1 and 1. So, when the indicator function returns 0, we need $L^A(k)$ to be -1.}
 
The data requester's objective is to find a bonus level $B$ (as in $B_i(k)$) that balances the accuracy of labels collected from workers, and the total payment. Denote requester's objective function at each step as $\mathcal U(B)$ (assigning $N$ tasks):
\begin{align*}
\mathcal U(B) := \sum_{k=1}^N \biggl [ \Pr[L^A(k) = L(k)]-\eta \sum_{n=1}^K \mathbb E[B_{w_k(n)}(k)]\biggr],
\end{align*}
 where $L(k)$ denotes the true label of task $k$, and $\eta>0$ is a weighting constant balancing the two terms in the objective.

Since we have assumed that all tasks have been assigned the same number of workers, and workers are homogeneous in their labeling accuracy and cost (i.i.d.), we know all tasks enjoy the same probability of having a correct label (a-priori). We denote this probability as $\mathcal P(B) := \Pr[L^A(k) = L(k)], \forall k$. Further as workers do not receive payment when a task is not assigned to him, $\mathcal U(B)$ can be simplified (and normalized \footnote{which does not affect optimizing the utility function.}) to the following form:
\begin{align}
\mathcal U(B) = \mathcal P(B) - \frac{\eta}{N} \sum_{i \in \mathcal C} \sum_{k=1}^N \mathbb E[B_i(k)],~\label{learner:objective}
\end{align}
Suppose there exists a maximizer
$
B^* = \text{argmax}_B \mathcal U(B)~.
$

\subsection{Sequential learning setting}
\rev{Suppose our sequential learning algorithm goes for $T$ stages. At each stage $t=1,...,T$, learner assigns a certain number of tasks $M_i(t)$ to a set of selected workers $i \in S(t)$\footnote{For details please refer to our algorithm.}. The learner offers a bonus bundle $B_{i,t}$ to each worker $i \in S(t)$ (the bonus constant in reward mechanism). The regret of offering $\{B_{i,t}\}_{i,t}$ w.r.t. $B^*$ is defined as follows:
\begin{align}
R(T) = T\cdot\mathcal U(B^*)- \sum_{t=1}^T  \sum_{i \in S(t)}\frac{M_i(t) \cdot \mathbb E[\mathcal U(B_{i,t})] }{\sum_{j \in S(t)} M_j(t)}.\label{regret:defn}
\end{align}
}
Note we normalize $\mathbb E[\mathcal U(B_{i,t})]$ using the number of assignments -- intuitively the more the requester assigned with a wrong price, the more regret will be incurred. The goal of the data requester is to design an algorithm such that $R(T) = o(T)$. We can also define $R(T)$ as being un-normalized, which will add a constant (bounded number of assignments at each step) in front of our results later. 


\section{One stage game-theoretic analysis}\label{sec:static}


From the data requester's perspective, we need to first understand workers' actions towards effort exertion and reporting under different bonus levels, in order to figure out the optimal $B^*$. We start with the case that the data requester knows the cost distribution, and we characterize the equilibria for effort exertion and reporting, i.e. $(\mathbf{e_i,r_i})$, on workers' side. Note $\mathbf{e_i,r_i}$ are both vectors defined over all tasks -- this is a simplification of notation as workers do not receive all tasks. We are safe as if task $k$ is not assigned to $i$,  worker $i$ does not make decisions on $(e_i(k),r_i(k))$. We define Bayesian Nash Equilibrium (BNE) in our context as follows:
\begin{definition}
We say $\{(\mathbf{e^*_i,r^*_i})\}_{i \in \mathcal C}$ is a BNE if $\forall j, (\mathbf{e_j,r_j})$:
\begin{align*}
\mathbb E[u_j| \{(\mathbf{e^*_i,r^*_i})\}_{i \in \mathcal C}] \geq 
\mathbb E[u_j| \{(\mathbf{e^*_i,r^*_i})\}_{i \neq j},(\mathbf{e_j, r_j})]~. 
\end{align*}
\end{definition}
\frev{In this paper, we restrict our attention to symmetric BNE.}
For each assigned task, we have a Bayesian game among workers in $\mathcal C$: a worker's decision on effort exertion is a function of $\mathbf{c_i}$, $\mathbf{e_i(c_i)}: [0, c_{\text{max}}]^M \rightarrow \{H, L\}^M$, which specifies the effort levels for worker $i$ when his realized cost is $\mathbf{c_i}$ and $\mathbf{r_i(e_i)}: \{0, 1\}^M \rightarrow \{0, 1\}^M$ gives the reporting strategy for the chosen effort level. We focus on threshold policies: that is, there is a threshold $c^*$ such that $e_i(k) =H$ for all $c_i(k) \leq c^*$ and $e_i(k) =L$ otherwise. \frev{In fact, players must play a threshold strategy for effort exertion at any symmetric BNE: workers' outputs do not depend on $c_i(k)$ and worker $i$'s chance of getting a bonus will not change when he has a cost $c'_i(k) < c_i(k)$; so a worker will choose to exert effort, if it is a better move for an even higher cost.} We will use $(c^*,r_i(k))$ to denote this threshold ($c^*$) strategy for workers. Denote $r_i(\cdot) \equiv 1$  the reporting strategy that $r_i(H) = r_i(L) =1$, i.e. reporting truthfully regardless of the choice of effort.


\begin{theorem}
When $p_L>0.5$ and $F(c)$ is concave, there exists a unique threshold $c^*(B)>0$ for $B>0$ such that $(c^*(B), 1)$ is a symmetric BNE for all workers on all tasks.\label{bne:static}
\end{theorem}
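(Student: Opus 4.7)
The plan is to verify the BNE conditions in two stages: first, establish that truthful reporting is a best response given any effort profile; then reduce the effort decision to a threshold rule and solve the resulting symmetric fixed-point equation.

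For truthful reporting, I would invoke the truth-telling property of the \texttt{DG13} bonus. Given that the other workers report truthfully, the agreement indicator $1(L_i(k)=L_j(k))$ inside $B'_i(k)$ is maximized in expectation by reporting worker $i$'s own observation (since $p_H, p_L > 1/2$ and worker $j$'s report is positively correlated with the latent label), while the penalty term $L^d_i L^d_j + \overline{L}^d_i \overline{L}^d_j$ does not involve $L_i(k)$ at all. Hence truthful reporting is a best response on every task, regardless of the effort profile.

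Next I would fix a symmetric threshold $\tilde c$ for the other workers and derive worker $i$'s best response. Let $q = F(\tilde c)$ and $\bar p(q) := q p_H + (1-q) p_L$. A direct computation yields expected agreement $p_{e_i(k)}\bar p(q) + (1-p_{e_i(k)})(1-\bar p(q))$, so the expected marginal bonus on task $k$ from choosing $H$ over $L$ equals
\begin{align*}
B(p_H - p_L)\bigl(2\bar p(q)-1\bigr),
\end{align*}
which is strictly positive because $p_L > 1/2$ and $p_H > p_L$. The expected contribution of the penalty term to $B_i(k)$ is independent of $e_i(k)$: under truthful reporting and task-level independence, $\mathbb E[L^d_i]$ reduces to a prior-determined constant that does not vary with individual effort choices. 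It follows that worker $i$'s best response is itself a threshold strategy with threshold $c^*_i = g(\tilde c)$, where
\begin{align*}
g(c) := B(p_H - p_L)\bigl(2F(c)(p_H-p_L) + (2p_L - 1)\bigr),
\end{align*}
so a symmetric BNE threshold must satisfy $\tilde c = g(\tilde c)$.

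Finally, I would prove existence and uniqueness of this fixed point. Since $g$ is continuous with $g(0)=B(p_H-p_L)(2p_L-1)>0$ and uniformly bounded above by $B(p_H-p_L)(2p_H-1)$, the intermediate value theorem yields a positive fixed point $c^*(B)$ (extending $g$ by its value at $c_{\max}$ for larger arguments if needed). For uniqueness, concavity of $F$ makes $g$ a non-decreasing affine image of $F$, hence concave, so $h(c):=g(c)-c$ is concave with $h(0)>0$. If $0<c_1<c_2$ were two zeros of $h$, concavity applied to the convex combination $c_1 = (c_1/c_2)\,c_2 + (1-c_1/c_2)\cdot 0$ gives
\begin{align*}
0 = h(c_1) \;\geq\; (c_1/c_2)\,h(c_2) + (1 - c_1/c_2)\,h(0) \;=\; (1 - c_1/c_2)\,h(0) \;>\; 0,
\end{align*}
a contradiction. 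The most delicate step in this plan is justifying that the \texttt{DG13} penalty term is effort-irrelevant in expectation, since this is what lets the per-task effort decisions decouple and reduces the whole problem to the one-dimensional fixed-point equation $\tilde c = g(\tilde c)$; once this reduction is in place, the fixed-point analysis is standard.
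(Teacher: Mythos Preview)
Your reduction to a one-dimensional fixed point and the concavity argument for uniqueness are both sound and match the paper's approach. The gap is precisely the step you flag as delicate: the \texttt{DG13} penalty term is \emph{not} effort-irrelevant in expectation.

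Under truthful reporting, $\mathbb E[L_i(i_n)] = (2p_{e_i(i_n)}-1)(\mathcal P_{+} - \mathcal P_{-})$, which depends on the effort exerted on task $i_n$ whenever the label prior is non-uniform; your claim that $\mathbb E[L^d_i]$ is a prior-determined constant therefore fails unless $\mathcal P_{+}=\mathcal P_{-}$. More to the point, although the penalty inside $B_i(k)$ itself does not use $L_i(k)$, task $k$ appears among the $d$ distinct tasks in the bonus computation for $D$ \emph{other} tasks $k'$, so $e_i(k)$ shifts $\mathbb E[L^d_i]$ in those $D$ penalty terms and hence shifts $\sum_{k'} \mathbb E[B_i(k')]$. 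The paper computes this cross-task contribution explicitly and shows that it subtracts from the marginal benefit of effort a term proportional to $\tfrac{D}{d}(\mathcal P_{+}-\mathcal P_{-})^2$; the net effect is that the correct fixed-point map is your $g(c)$ scaled by the factor $\bigl[1-\tfrac{D}{d}(\mathcal P_{+}-\mathcal P_{-})^2\bigr]$. The assignment constraint $D\le d$ is exactly what keeps this factor strictly positive, and once it is inserted, your affine-in-$F$ structure, the positivity of the intercept (from $p_L>1/2$), and the concavity-based uniqueness argument all go through unchanged.
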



\paragraph{Other equilibrias:} The above threshold policy is unique only in non-degenerate effort exertion ($c^*>0$). There exist other equilibria. We summarize them here:
\begin{itemize}
\item \emph{Un-informative equilibrium:} Colluding by always reporting the same answer to all tasks is an equilibrium. Similarly as mentioned in \cite{dasgupta2013crowdsourced}, when colluding (pure or mixed strategies) the bonus index defined in Eqn. (\ref{bonus:index}) reduces to 0 for each worker, which leads to a worse equilibrium. 
\item \emph{Low effort:} When $p_L = 1/2$, $c^*=0$, i.e., no effor exertion (followed by either truthful or untruthful reporting) is also an equilibrium: when no one else is exerting effort, each worker's answer will be compared to a random guess. So there would be no incentive for effort exertion.
\item \emph{Permutation:} Exerting the same amount of effort and then reverting the reports ($r_i \equiv 0$) is also an equilibrium. 
\end{itemize}
But we would like to note that though there may exist multiple equilibria, all others lead to strictly less utility for each worker at equilibrium compared to the threshold equilibrium with $c^*>0$ followed by truthful reporting, except for the permutation equilibria, which gives the same expected utility to workers. \com{May want to mention that the permutation equilibrium gives the same utility/expected utility.}

\emph{Solve for optimal $B^*$:} After characterizing the equilibria $c^*$ on effort exertion as a function of $B$, we can compute $\mathcal P(B)$ and $ \mathbb E[B_i(k)]$ for each reward level $B$. Then solving for the optimal reward level becomes a programming problem in $B$, which can be solved efficiently when certain properties, e.g. convexity, can be established for $\mathcal U(\cdot)$. 

\section{Sequential Peer Prediction}\label{sec:learn}


In this section we propose an adaptive learning mechanism to learn to converge to the optimal or nearly optimal reward level. As mentioned earlier, a recent work \cite{fullversion} attempted to resolve this challenge. But besides the output labels, workers are also required to report the private costs, in which sense the proposed learning mechanism is not ``minimal''. We try to remove this requirement by learning only through the label information reported by the workers. In this section, we assume the requirements for Theorem \ref{bne:static} hold, and workers will follow an equilibrium that returns the highest utility.

\subsection{Challenges}

In designing the mechanism, we face two main challenges. The first challenge is on the learning part. In order to select the best $B^*$, we need to compute $\mathcal U(B), \forall B$, which can be computed as a function of $B$ and $\bar{p}(B)$, the probability of labeling accurately when $B$ is offered and the threshold policy $c^*(B)$ is adopted by workers:
\begin{align}
\bar{p}(B):=F(c^*(B))p_H+[1-F(c^*(B))]p_L. \label{pb}
\end{align}
The dependency on $B$ is straight-forward. For $\bar{p}(B)$, e.g. when using Chernoff bound for approximating $\mathcal P(B)$:
\begin{align*}
\mathcal P(B)&= \Pr\biggl[\frac{\sum_i 1(\text{worker i is correct})}{M} \geq 0.5\biggr]\\
&\geq 1-\text{exp}(-2(\bar{p}(B)-0.5)^2 M),
\end{align*}
it is clear $\mathcal P(B)$ is a function of $\bar{p}(B)$. \frev{In fact both $\mathbb E[B_i(k)]$ and $\mathcal P(B)$ are functions of $\bar{p}(B)$, so is $\mathcal U(\cdot)$. For details, please see Appendix of \cite{fullversion_aaai17}. } The question pings down to learn $\bar{p}(B)$. Since we do not have the ground-truth labels, we have no way to directly evaluate $\bar{p}(B)$ via checking workers' answers. Also since we do not elicit reports on private costs, we are un-able to estimate the amount of induced effort for each reward level.  

The second challenge we have is that when workers are returning and participating in a/an sequential/adaptive learning mechanism, they have incentives to mislead the learning process by deviating from the one-shot BNE strategy for a task, so to create untruthful samples (and then collected by learner), which will lead the learner into believing that inducing certain amount of effort requires a much higher reward level. The cost-reporting mechanism described in \cite{fullversion}  proposes a method to deter such a deviation by eliminating workers who over-reported from receiving potentially higher bonus. We will describe a two-fold cross validation approach to decouple such incentives, which aims to remove the requirement of reporting additional information. 

\subsubsection{Learning w/o ground-truth}
The following observation inspires our method for learning without ground-truth. For each bonus level $B$, we can estimate $\bar{p}(B)$ (at equilibrium) through the following experiments: the probability of observing a pair of matching answers for any pair of workers $i,j$ (denoted by $p_{m}(B)$ for each bonus level $B$) on equilibrium can be written as follows:
\begin{align}
p_m(B) &= \underbrace{\bar{p}^2(B)}_{\text{match on correct label}}+\underbrace{(1-\bar{p}(B))^2}_{\text{match on wrong label}}. \label{infer:equilibria}
\end{align}
The above matching formula forms a quadratic equation of $\bar{p}(B)$. 
From Eqn. (\ref{pb}) we know
$
\bar{p}(B) \geq 0.5,~\forall B,~\text{when~} p_H, p_L \geq 0.5
$. Then the only solution to the matching Eqn. (\ref{infer:equilibria}) that is larger than $0.5$ is
\begin{align*}
\bar{p}(B) =1/2+\sqrt{2p_m(B)-1}/2.
\end{align*}
Above solution is well defined, as from  Eqn. (\ref{infer:equilibria}) we can also deduce that $p_m(B) \geq 1/2$. Therefore though we cannot evaluate each worker's labeling accuracy directly, we can make such an inference using the matching probability, which is indeed observable.

\subsubsection{Decoupling incentives via cross validation}

To solve the incentive issue, we propose the following cross validation approach (illustrated in Fig. \ref{mechanism:ill}). First the entire crowd $\mathcal C$ is separated into two groups $G_1, G_{-1}$ uniformly random, but with equal size (when $N$ is even) or their sizes differ by at most 1 (when $N$ is odd). Suppose we have at least $N \geq 4$. Denote worker $i$'s group ID as $g(i) \in \{-1,1\}$. Then we have $|G_1|, |G_{-1}| \geq 2$. For our learning algorithm, only the data/samples collected from group $-g(i)$ will be used to reward any worker $i$ in group $g(i)$. Secondly when selecting reference worker for comparing answers for mechanism (\texttt{DG13}), we select from the same group $g(i)$. 
\begin{figure}[!ht]
\centering
\includegraphics[width=0.4\textwidth]{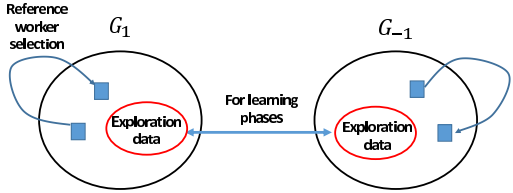}
\caption{Illustration of our mechanism.}\label{mechanism:ill} 
\end{figure}


\subsection{Mechanism}
We would like to treat each bonus level as an ``arm'' (as in standard MAB context) to explore with. 
Since we have a continuous space of bonus level $B$, we separate the support of bonus level $B$ ($[0,\bar{B}]$) into finite intervals. Then we treat each bonus interval as an arm. Our goal is to select the best one of them, and bound the performance in such a selection. 

\com{I don't see why we need the z superscript below. I could have missed something.}We set up $\lceil T^z \rceil$ arms as follows:  chooses a $0<z<1$, separate $[0,\bar{B}]$ into $N_a = \lceil T^z \rceil$ uniform intervals:
$$
[0, \bar{B}/N_a], ..., [(k-1)\bar{B}/N_a, k\bar{B}/N_a],....,[(N_a-1)\bar{B}/N_a, \bar{B}]
$$ For each interval we take its right end point as the bonus level to offer:
$B_k =  k\bar{B}/N_a.$ Denote by $\tilde{p}^{g}_{m,t}(B_k)$ the estimated matching probability for agents in group $g$ under bonus level $B_k$, and $\tilde{p}^g_t(B_k)$s the estimated $\bar{p}(B_k)$ for group $g$, at stage $t$; and we use $\tilde{\mathcal U}_{\tilde{p}}(B)$ to denote the estimated utility function when using a noisy $\bar{p}(B)$ ($\tilde{p}$), instead of the true ones.
We present Mechanism \ref{m:simple}.\footnote{We assume we know $p_L$ or a non-trivial lower bound on $p_L>0.5$.}
\begin{algorithm}[!ht]
\caption{ (\texttt{SPP\_PostPrice})}\label{m:simple}
\begin{algorithmic}
\State \textbf{Initialization:} $t=1$. $D(t):=t^{\theta}\log t, 0\leq \theta\leq 1$. Explore each bonus level $B_k$ once and update $\tilde{p}^{g}_{m,t}(B_k)$ (for details please refer to the exploration phases), and set the number of explorations as $n_i(t) = 1$.
\For{$t=1$ to $T$}
\State Set $\mathcal E(t):=\{i: n_i(t) < D(t)\} $. 
\If{ \textbf{$\mathcal E(t) \neq  \emptyset$}}
\State \emph{Exploration}: randomly pick $B_k, k \in \mathcal E(t)$ to offer.
\State Follow subroutine (\texttt{Explore\_Crowd}).
\Else{}
\State \emph{Exploitation}: $S(t) = \mathcal C$.  Offer $B^{*}_{g,t}$ to $i \in G_g$:
\begin{align*}
B^{*}_{g,t} = \text{argmax}_{B_k, k=1,2,...,\lceil T^z \rceil} \tilde{\mathcal U}_{\tilde{p}^g_t(B_k)}(B_k),
\end{align*}
\State Follow (\texttt{DG13}) for workers in each group $G_g$.
\EndIf
\EndFor
\end{algorithmic}
\end{algorithm}

\begin{algorithm}[!ht]
\caption{ (\texttt{Explore\_Crowd})}\label{m:sub}
\begin{algorithmic}
\State At exploration phases,
\begin{itemize}
\item[1:] Randomly select two workers $S_g(t) = \{i^g(t),j^g(t)\}$ from each group $G_g$; $S(t) := \cup_{g} S_g(t)$.
\item[2:] Assign 1 common and $d$ distinct tasks to each $S_g(t)$.
\item[3:] Denote by $\mathcal E_k(t)$ the set of time steps the algorithm enters exploration and offered price $B_k$ by time $t$. Estimate the probability of matching $\tilde{p}^{g}_{m,t}(B_k)$ for each crowd $G_g, g \in \{-1,1\}$ by averaging:
\begin{align*}
\tilde{p}^{g}_{m,t}(B_k) = \overline{\sum_{t' \in \mathcal E_k(t)} 1(L_{i^{-g}(t')}(t') = L_{j^{-g}(t')}(t'))},
\end{align*}
and reset $\tilde{p}^{g}_{m,t}(B_k)$ to $\max\{\tilde{p}^{g}_{m,t}(B_k),1/2\}$. $L_{i}(t')$ denotes the label for the common task at time $t'$.
\item[4:] Compute $\tilde{p}^g_t(B_k)$ (estimate for $\bar{p}(B_k)$ at time $t$):
\begin{align*}
\tilde{p}^g_t(B_k) = 1/2+\sqrt{2\tilde{p}^{g}_{m,t}(B_k)-1}/2.
\end{align*}
Reset $\tilde{p}^g_t(B_k):=\max\{\tilde{p}^g_t(B_k),p_L\}$.
\end{itemize}
\end{algorithmic}
\end{algorithm}

Note since we assign the same number of tasks to each labeler at all stages, we have the regret defined in Eqn. (\ref{regret:defn}) become equivalent with the following form:
\begin{align*}
R(T) = \sum_{t=1}^T |  \sum_{g \in \{1,-1\}} \omega_g(t) \cdot \mathbb E[\mathcal U(B^{*}_{g,t})] - \mathcal U(B^*))|,
\end{align*}
where when $t$ is in exploration stages 
$
\omega_g(t) \equiv 1/2,
$
otherwise
$
\omega_g(t) := |G_g|/N.
$

\subsection{Equilibrium analysis: workers' side}

Denote a worker's action profile at step $t$ as $\mathbf{a_i}(t):=(\mathbf{e_i}(t),\mathbf{r_i}(t))$. We adopt BNE as our solution concept:
\begin{definition}
A set of reporting strategy $\{\mathbf{\tilde{a}_i}:= \{\mathbf{\tilde{a}}_i(t)\}_{t=1,...,T}\}_{i \in \mathcal C}$ is a BNE if for any $i$, $\forall \mathbf{\tilde{a}'_i} \neq \mathbf{\tilde{a}_i}$ we have
\begin{align*}
&\sum_{t} \mathbb E[\max_{\mathbf{e_i}(t),\mathbf{r_i}(t)}u_i(\mathbf{\tilde{a}_i}, \mathbf{\tilde{a}_{-i}})|\mathbf{\tilde{a}_i}(1:t-1), \mathbf{\tilde{a}_{-i}}(1:t-1)] \\
&\geq \sum_{t} \mathbb E[\max_{\mathbf{e_i}(t),\mathbf{r_i}(t)} u_i(\mathbf{\tilde{a}'_i}, \mathbf{\tilde{a}_{-i}})|\mathbf{\tilde{a}'_i}(1:t-1), \mathbf{\tilde{a}_{-i}}(1:t-1)] ~.
\end{align*}
\end{definition}

We first characterize the equilibrium strategy for workers' effort exertion and reporting with (\texttt{SPP\_PostPrice}).
\begin{Lemma}
At a symmetric BNE, the strategy for workers can be decoupled into a composition of myopic equilibrium strategies, that is 
$
e_i(t) = c^*(B^{*,g(i)}(t)),
$
combined with $r_i(t) \equiv 1, \forall i,t$.
\end{Lemma}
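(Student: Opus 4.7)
The plan is to exploit the cross-validation structure of (\texttt{SPP\_PostPrice}) to reduce the $T$-stage incentive problem to a sequence of independent one-stage problems, and then invoke Theorem~\ref{bne:static}.

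First I would establish the key decoupling claim: for any worker $i \in G_{g(i)}$ and any stage $t$, the posted price $B^{*,g(i)}(t)$ (as well as every future posted price $B^{*,g(i)}(t'),\ t' > t$, seen by $i$) is a deterministic function of the matching estimates $\tilde p^{g(i)}_{m,\cdot}(B_k)$. By construction of (\texttt{Explore\_Crowd}), these matching estimates are built \emph{only} from labels produced by the opposite group $G_{-g(i)}$ (the formula averages $\mathbf{1}(L_{i^{-g}(t')} = L_{j^{-g}(t')})$). Consequently, no action $\mathbf{a_i}(t)$ of worker $i$ can alter the bonus level he himself is offered at any stage $t' \geq t$; his own action only enters the history that drives prices for the \emph{other} group, which does not appear in his utility.

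Given this independence, I would decompose worker $i$'s expected multi-stage utility:
\begin{align*}
\mathbb E\Big[\sum_{t=1}^T u_i(\mathbf{a_i}(t),\mathbf{a_{-i}}(t))\Big] = \sum_{t=1}^T \mathbb E\big[u_i(\mathbf{a_i}(t),\mathbf{a_{-i}}(t)) \,\big|\, B^{*,g(i)}(t)\big],
\end{align*}
where each summand depends on $\mathbf{a_i}(t)$ only through the current-stage DG13 comparison against a reference worker drawn from the same group $G_{g(i)}$ (recall $|G_{g(i)}| \geq 2$, so DG13 is always well-defined). The outer maximization over strategies therefore separates across $t$, and any symmetric BNE of the sequential game is, stage by stage, a symmetric BNE of the corresponding one-stage game at posted price $B^{*,g(i)}(t)$.

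Finally, I would apply Theorem~\ref{bne:static} to each of these one-stage sub-games. Under the hypotheses assumed here ($p_L > 0.5$ and $F$ concave), the unique non-degenerate symmetric BNE is the threshold policy $c^*(B^{*,g(i)}(t))$ coupled with truthful reporting $r_i(t) \equiv 1$; the other symmetric equilibria (un-informative collusion, permutation, or zero-effort when $p_L = 1/2$) are ruled out by the ``highest-utility equilibrium'' selection stated above Theorem~\ref{bne:static}, which the workers are assumed to follow.

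The main obstacle I expect is formalizing the decoupling cleanly: one must verify that \emph{every} channel through which worker $i$'s report could propagate into his own future payoff — both in exploration rounds (through the matching estimates) and in exploitation rounds (through the $\text{argmax}$ that picks $B^{*,g(i)}_t$) — is blocked by the group split. This amounts to a careful bookkeeping of which labels feed into which group's estimator, together with the observation that the reference-worker selection for DG13 stays \emph{within} the group so that no cross-group current-payoff channel exists either. Once that bookkeeping is done, the reduction to Theorem~\ref{bne:static} is immediate.
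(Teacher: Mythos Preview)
Your proposal is correct and follows essentially the same line as the paper's proof: both exploit the cross-validation split to argue that worker $i$'s reports cannot influence the prices he is offered (since those are computed from $G_{-g(i)}$'s exploration data), reducing the sequential problem to independent one-stage games solved by Theorem~\ref{bne:static}. The paper's argument additionally singles out one channel you fold into your final ``bookkeeping'' paragraph---namely that the exploration schedule itself is governed by the deterministic $D(t)$ and hence cannot be manipulated by $i$'s reports---and it makes explicit that the same decoupling applies to $i$'s reference peers in $G_{g(i)}$, so their behavior (and thus $i$'s matching probability) is unaffected by $i$'s deviation; both points are implicit in your plan but worth stating when you write it out.
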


\begin{proof}
W.l.o.g., consider worker $i$ in $G_1$. We are going to reason that deviating from one step BNE strategy for effort exertion is non-profitable for worker $i$, when other players are following the equilibria. Since the one stage equilibrium strategy maximizes the utility at current stage, and it does not affect the past utilities that have been already collected, the potential gain by deviating comes from the future gains in utilities in: (1) the offered bonus level (2) matching probability from other peers. For the bonus level offered to worker $i$, it will only be computed using observed data from workers in $G_{-1}$ at exploration phases. Note for our online learning process,  the exploration phases only depend on the pre-defined parameter $D(t)$, which does not depend on worker $i$'s data (deterministic exploration). Similarly for all other workers $j \in G_1$ (reference workers), their future utility gain is not affected by worker $i$'s data. Therefore an unilateral deviation from worker $i$ will not affect the matching probability from other peers. So no deviation  is profitable.
\end{proof}

Again consider colluding workers. Potentially when offered a certain bonus level, workers can collude by not exerting effort regardless of their cost realization, so to mislead the learner into believing that in order to incentivize certain amount of effort, a much higher bonus level is needed.\footnote{This is similar to the colluding strategy that contributes uninformative signals we studied in Section \ref{sec:static}.} There potentially exist infinitely many colluding strategies for workers to game against a sequential learning algorithm. We focus on the following easy-to-coordinate (for workers), yet powerful strategy (\texttt{Collude\_Learn}):
\begin{definition}[\texttt{Collude\_Learn}]
Workers collude by agreeing to exert effort when the offered bonus level $B^{*}_{g(i),t}$ satisfies $B^{*}_{g(i),t} \geq B(c_{i,t}(k)) + \nabla B,~
$where $c_{i,t}(k)$ is the cost for workers $i$ to exert effort for task $k$ at time $t$. $\nabla B > 0$ is a collusion constant. %
\end{definition}
In doing so workers mislead the learner into believing that a higher bonus (differs by $\nabla B$) is needed to induce certain effort. 
The next lemma establishes the collusion-proofness: 
\begin{Lemma}
Colluding in \emph{(\texttt{SPP\_PostPrice})}  via \emph{\texttt{(Collude\_Learn)}} is not an equilibrium. \label{collusion:price}
\end{Lemma}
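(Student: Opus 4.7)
The plan is to reduce the sequential question to a single-stage one using the cross-validation structure, and then show that (\texttt{Collude\_Learn}) fails to be a mutual best response even in the one-shot game. The strategy mirrors the previous lemma on the BNE of (\texttt{SPP\_PostPrice}), but now with the ``others'' action profile being collusion rather than the myopic BNE.

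First I would argue that any unilateral deviation by worker $i$ leaves her own sequence of future bonus offers unchanged. By the construction of (\texttt{SPP\_PostPrice}), the bonus $B^{*}_{g(i),t}$ is computed solely from the matching statistics $\tilde{p}^{g(i)}_{m,t}(\cdot)$, which are collected in the exploration phase from workers in the \emph{opposite} group $G_{-g(i)}$; and the exploration schedule $\mathcal{E}(t)$ is deterministic. Furthermore, each round's \texttt{DG13} payment to worker $i$ depends only on current-round reports of $i$ and a reference worker in $G_{g(i)}$, with no carry-over across rounds. Hence worker $i$'s optimal response to everyone else following (\texttt{Collude\_Learn}) is obtained by myopically maximising single-stage expected utility in every round where the decision is hers to make.

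Second, I would analyse this single-stage best response. Under (\texttt{Collude\_Learn}), every other worker exerts high effort iff $c \leq \tilde{c}(B^{*}_{g(i),t}) := c^*(B^{*}_{g(i),t} - \nabla B)$. Because $\nabla B>0$ and $c^*(\cdot)$ is strictly increasing in the bonus level (Theorem~\ref{bne:static}), we have the strict inequality $\tilde{c}(B^{*}_{g(i),t}) < c^*(B^{*}_{g(i),t})$. The reference worker's labelling accuracy is therefore $\bar p_{\text{coll}} = F(\tilde{c})p_H + (1-F(\tilde{c}))p_L$, strictly less than $\bar p(B^{*}_{g(i),t})$ but still strictly above $1/2$. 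The marginal \texttt{DG13} reward of exerting high versus low effort then equals $B^{*}_{g(i),t}\cdot (p_H-p_L)(2\bar p_{\text{coll}}-1)$ times a positive constant that depends on the assignment structure; call this quantity $c_{\text{BR}}(\tilde{c};B^{*}_{g(i),t})$. So worker $i$'s unique threshold best response is to exert effort iff $c_i \leq c_{\text{BR}}(\tilde{c};B^{*}_{g(i),t})$.

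Third, I would show $c_{\text{BR}}(\tilde{c};B^{*}_{g(i),t}) > \tilde{c}$, which directly contradicts (\texttt{Collude\_Learn}) being a best response. The map $x \mapsto c_{\text{BR}}(x;B)$ has a unique fixed point at $x = c^*(B)$ by Theorem~\ref{bne:static}; its slope is $2B(p_H-p_L)^2 F'(x)$ times the same positive constant, and concavity of $F$ together with uniqueness of the BNE threshold forces this slope to be less than $1$ at the fixed point. Consequently $c_{\text{BR}}(x;B) - x > 0$ for $x < c^*(B)$, and in particular $c_{\text{BR}}(\tilde{c};B^{*}_{g(i),t}) > \tilde{c}$. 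Thus for any cost realisation in the nonempty open interval $(\tilde{c},\,c_{\text{BR}}(\tilde{c};B^{*}_{g(i),t}))$, worker $i$ strictly prefers to exert effort, whereas (\texttt{Collude\_Learn}) instructs her to shirk; this deviation yields strictly larger single-stage utility and, by Step 1, no compensating loss in future rounds. Hence (\texttt{Collude\_Learn}) is not a BNE.

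The main obstacle I anticipate is Step 3: verifying that $c_{\text{BR}}(\tilde{c};B) > \tilde{c}$ rather than below it. The sign is not obvious a priori because, relative to the BNE configuration, collusion simultaneously lowers both the reference accuracy $\bar p$ and the candidate threshold, and one must argue these two effects do not cancel. This is precisely where the concavity of $F$ assumed in Theorem~\ref{bne:static}, combined with the local-stability (slope $<1$) of the BNE fixed point, does the work; if this local-stability step turned out to fail, one would need a global monotonicity argument comparing $c_{\text{BR}}(\tilde{c};B)$ to $c^*(B-\nabla B)$ directly via the indifference condition, which is a straightforward but more laborious calculation.
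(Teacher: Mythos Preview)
Your proposal is correct and follows the same two-step architecture as the paper: use the cross-validation design to strip away any intertemporal incentive, then exhibit a profitable one-shot deviation from the collusion prescription.

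The difference is entirely in the second step. The paper's proof is much terser than yours: it does not compute the best-response threshold at all, but simply observes that the marginal \texttt{DG13} reward from exerting effort satisfies
\[
\mathbb E[B_i(k)\mid e_i=H]-\mathbb E[B_i(k)\mid e_i=L] = (p_H-p_L)\bigl(2\,\mathbb E[p_{e_j(k)}]-1\bigr) \geq (p_H-p_L)(2p_L-1) > 0
\]
whenever $p_L>1/2$, regardless of the reference worker's threshold, and hence that for small enough cost realisations a worker strictly prefers to exert effort. Your Step~3, by contrast, uses the concavity of $F$ and the fixed-point characterisation from Theorem~\ref{bne:static} to establish the sharper inequality $c_{\text{BR}}(\tilde c; B) > \tilde c$ and thereby identify the deviation interval $(\tilde c,\, c_{\text{BR}})$ explicitly. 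The paper's shortcut buys brevity but, read literally, only yields $c_{\text{BR}}>0$ rather than $c_{\text{BR}}>\tilde c$; your argument is the rigorous completion, and the ``obstacle'' you flagged in Step~3 is precisely the gap the paper leaves implicit. Note also that your concavity/fixed-point route does not need the uniform lower bound $p_L>1/2$ in the same way the paper's bound does, though that assumption is already in force via Theorem~\ref{bne:static}.
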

In fact the reasoning for removing \emph{all} symmetric colluding equilibria is similar -- regardless of how others collude on effort exertion, when a worker's realized cost is small enough, he will deviate. 

\subsection{Performance of (\texttt{SPP\_PostPrice})}
We impose the following assumptions  \footnote{Please refer to our full version for justification. }:
\begin{assumption}
$\mathcal U(\cdot)$ is Lipschitz in both $\bar{p}(B)$ and $B$:
\begin{align*}
|\tilde{\mathcal U}_{\tilde{p}}(\tilde{B})-\mathcal U(B)| \leq L_1  |\tilde{p}(\tilde{B})-\bar{p}(\tilde{B})|+L_2|\tilde{B}-B|,
\end{align*}
 $L_1, L_2>0$ are the Lipschitz constants.
\end{assumption}
%

We have the following theorem:
\begin{theorem}
The regret of \emph{(\texttt{SPP\_PostPrice})} is: 
\begin{align*}
R(T) \leq  & O(\lceil T^{\theta+z}\log T \rceil) + \frac{L_1C(\theta)}{\sqrt{2p_L-1}}T^{1-\theta/2} \\
&+ L_2 C(z)\bar{B} T^{1-z} + \text{const.} 
\end{align*}
$0<\theta,z<1$ are tunable parameters. $C(\theta), C(z) >0$ are constants. The optimal regret is $R(T) \leq \tilde{O}(T^{3/4})$ when setting $z=1/4,\theta=1/2$.
\label{thm:learn}
\end{theorem}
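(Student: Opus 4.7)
The plan is to split the regret into three quantitatively independent sources and bound each separately: (i) a uniform-exploration charge paid while $\mathcal{E}(t)\neq\emptyset$, (ii) a discretization error from replacing the continuous bonus space $[0,\bar{B}]$ by the grid $\{B_k\}$, and (iii) an estimation error from using the empirical $\tilde{p}^g_t(B_k)$ in place of $\bar{p}(B_k)$ inside $\tilde{\mathcal{U}}_{\tilde{p}}(\cdot)$. A crucial preliminary is the equilibrium lemma: because every worker plays the one-stage threshold BNE at every round, the match indicators averaged by (\texttt{Explore\_Crowd}) are i.i.d.\ Bernoulli with mean exactly $p_m(B_k)$, so a Hoeffding argument can be applied directly to $\tilde{p}^g_{m,t}(B_k)$.

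For the exploration charge, only one arm is sampled per round, and arm $k$ remains eligible only while $n_k(t)<D(t)=t^\theta\log t$. Hence the total number of exploration rounds up to $T$ is at most $N_a\cdot D(T)=\lceil T^z\rceil\cdot T^\theta\log T$, and each contributes a bounded per-round gap $|\mathcal{U}(B^*)-\mathcal{U}(B_k)|\leq 1+\eta\bar{B}$, producing the leading $O(T^{\theta+z}\log T)$ term. For the discretization error, let $B_{k^*}$ be the grid point nearest $B^*$. Then $|B_{k^*}-B^*|\leq \bar{B}/N_a\leq \bar{B}T^{-z}$, and the Lipschitz-in-$B$ part of the assumption gives $\mathcal{U}(B^*)-\mathcal{U}(B_{k^*})\leq L_2\bar{B}T^{-z}$, which when summed over the at most $T$ exploitation rounds yields the $L_2 C(z)\bar{B}T^{1-z}$ contribution.

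The main obstacle is the estimation error. Using the standard bandit sandwich, $\mathcal{U}(B_{k^*})-\mathcal{U}(B^*_{g,t})\leq [\mathcal{U}(B_{k^*})-\tilde{\mathcal{U}}_{\tilde{p}^g_t(B_{k^*})}(B_{k^*})]+[\tilde{\mathcal{U}}_{\tilde{p}^g_t(B^*_{g,t})}(B^*_{g,t})-\mathcal{U}(B^*_{g,t})]$, since by definition of $B^*_{g,t}$ the middle difference $\tilde{\mathcal{U}}_{\tilde{p}}(B_{k^*})-\tilde{\mathcal{U}}_{\tilde{p}}(B^*_{g,t})$ is nonpositive. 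The Lipschitz assumption applied with $\tilde{B}=B$ reduces each bracket to $L_1|\tilde{p}^g_t(B)-\bar{p}(B)|$. A Hoeffding bound on the $n_i(t)\geq D(t)$ i.i.d.\ matches, union-bounded over the $\lceil T^z\rceil$ arms and $T$ rounds, gives $|\tilde{p}^g_{m,t}(B_k)-p_m(B_k)|\leq C\sqrt{\log T/t^\theta}$ with probability at least $1-1/T$. To propagate this through the inverse transform $p_m\mapsto 1/2+\sqrt{2p_m-1}/2$ I rely on the identity $2p_m(B)-1=(2\bar{p}(B)-1)^2$ implied by Eqn.~(\ref{infer:equilibria}), which both confirms the square root is real and, because $\bar{p}(B)\geq p_L>1/2$, keeps its derivative bounded; the explicit $\max\{\cdot,p_L\}$ clamp in (\texttt{Explore\_Crowd}) anchors the same bound on the empirical side. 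This yields the denominator factor depending on $2p_L-1$ in the displayed bound and a per-round error of order $L_1\sqrt{\log T/t^\theta}/\sqrt{2p_L-1}$. Summing $\sum_{t=1}^T t^{-\theta/2}=O(T^{1-\theta/2})$ gives the middle term, and the low-probability failure event is absorbed into the constant via $\bar{B}\sum_t T^{-1}=O(1)$.

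Combining the three pieces reproduces the displayed bound. Since $T^{\theta+z}\log T$, $T^{1-\theta/2}$, and $T^{1-z}$ have distinct exponents in $(z,\theta)$, balancing requires $\theta+z=1-\theta/2=1-z$, whose unique solution is $z=1/4$, $\theta=1/2$; at that choice every exponent equals $3/4$, so $R(T)=\tilde{O}(T^{3/4})$. The single hardest technical step is the square-root inversion, because the derivative of $p_m\mapsto \bar{p}$ diverges as $p_m\to 1/2$; the assumption $p_L>1/2$ together with the clamp in (\texttt{Explore\_Crowd}) is precisely what carries the concentration from the observable matching frequency to the unobservable labelling accuracy $\bar{p}(B_k)$.
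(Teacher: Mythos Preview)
Your proposal is correct and mirrors the paper's proof almost exactly: the same three-way split into exploration count, grid discretization via the $L_2$-Lipschitz bound, and estimation error via the sandwich $\mathcal{U}(B_{k^*})-\mathcal{U}(B^*_{g,t})\leq 2L_1\max_k|\tilde p^g_t(B_k)-\bar p(B_k)|$, with the square-root inversion controlled by the mean-value bound $|\bar p-\tilde p|\leq |p_m-\tilde p_m|/(2\sqrt{2p_L-1})$ and the final balancing $\theta+z=1-\theta/2=1-z$. The one minor deviation is bookkeeping: the paper sets the Hoeffding deviation to $t^{-\theta/2}$ so that the per-step failure probability is $2/t^2$ (summable to a constant), which avoids the extra $\sqrt{\log T}$ your global union bound introduces and matches the displayed middle term $\frac{L_1C(\theta)}{\sqrt{2p_L-1}}T^{1-\theta/2}$ without a log factor.
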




\begin{proof} (Sketch)
First notice by triangular inequality we know
$
R(T) \leq \sum_{g \in \{1,-1\}}\omega_g  R_g(T),
$
i.e., the total regret is upper bounded by the weighted sum of each group's regret. Since the two learning processes for the two groups $G_1,G_{-1}$ parallel each other, we omit the $g$ super- or sub-script. We analyze for the regret incurred at the exploration and exploitation stages respectively. For exploitation regret, we first characterize the estimation error for estimating each $U(B_k)$. First by mean value theorem we can show:
\begin{align*}
|\bar{p}(B_k)-\tilde{p}_t(B_k)| \leq \frac{1}{2\sqrt{2p_L-1}}|p_m(B_k)-\tilde{p}_{m,t}(B_k)|.
\end{align*}
At time $t$, an exploitation phase, there are $D(t)$ number of samples guaranteed; so the estimation $\tilde{p}_{m,t}(B_k)$ satisfies:
$
\Pr[|p_m(B_k)-\tilde{p}_{m,t}(B_k)| \geq \frac{1}{t^{\theta/2}}] \leq  \frac{2}{t^2}~.
$
Then w.h.p., we have established that 
$
|\bar{p}(B_k)-\tilde{p}_t(B_k)|  \leq \frac{ t^{-\theta/2}}{2\sqrt{2p_L-1}}.
$
Further by Lipschitz condition we know
$|\tilde{\mathcal U}(B_k)-\mathcal U(B_k)| \leq \frac{L_1 \cdot t^{-\theta/2}}{2\sqrt{2p_L-1}}.$ For any $B$ that falls in the same interval as $B_k$ we know:
$
|\mathcal U(B)-\mathcal U(B_k)| \leq L_2  |B-B_k|\leq L_2 \bar{B}T^{-z}.
$

Denote by
$B^*_t = \text{argmax}_{B_k, k=1,2,...,\lceil T^z \rceil} \tilde{\mathcal U}_t(B_k)~,
$
i.e., $B^*_t$ is the estimated optimal bonus level at time $t$ -- at any time $t$, by searching through all arm and we can find the one maximizes the utility function from the empirically estimated function the learner currently has. Combine above arguments, we can prove
$
\mathcal U(B^*)-\mathcal U(B^*_t) \leq \frac{L_1 \cdot t^{-\theta/2}}{2\sqrt{2p_L-1}}+L_2 \bar{B}T^{-z}.
$ Then the exploration regret can be bounded as 
\begin{align*}
&\sum_{t=1}^T ( \frac{L_1 \cdot t^{-\theta/2}}{2\sqrt{2p_L-1}}+L_2 \bar{B}T^{-z}) + O(\sum_{t=1}^T \frac{2}{t^2})\\
&\leq \frac{L_1 \cdot C(\theta)}{2\sqrt{2p_L-1}}T^{1-\theta/2} + L_2 \cdot C(z) \bar{B} T^{1-z} + \text{const.} 
\end{align*}
where we have used the fact that for any $0 < \alpha < 1$, there exists a constant $C(\alpha)$ such that
$\sum_{t=1}^T \frac{1}{t^{\alpha}} \leq C(\alpha) T^{1-\alpha}~.
$
The total number of explorations are ($T^{\theta}\log T$ number of explorations needed for each of the $\lceil T^z \rceil$ arms): 
$
\lceil T^z \rceil \cdot T^{\theta}\log T = \lceil T^{\theta+z}\log T \rceil~.
$
Sum over we finish the proof.
%
%

\end{proof}


\subsection{Beyond game theoretical model}\label{sec:nongame}

So far we have modeled the workers as being fully rational, and the reports as coming from game theoretical responses. Consider the case workers' responses do not necessarily follow a game (or arguably no one is fully rational). Instead we assume each worker has a labeling accuracy $p_i(B)$ for different $B$, where $p_i(B)$ can come from different models, being game theory driven, behavioral model driven or decision theory driven, and can be different for different workers. 

\emph{Challenges and a mean filed approach}: With this model, we can again write $\mathcal U(B)$ as a function of $\{p_i(B)\}_i$ and $B$. In order to select $B^*$, again we need to learn $p_i(B)$. We re-adopt the bandit model we described earlier, and estimate $p_i(B)$s via observing the matching probability between worker $i$ and a randomly selected reference worker $j$: For each $i,B$ we define 
$\bar{p}_{-i}(B) := \sum_{j \in G_{g(i)}\backslash i} p_j(B)/|G_{g(i)}\backslash i |, ~\text{and we have } $
\begin{align}
&p^{i}_{m}(B) = p_i(B)\bar{p}_{-i}(B) + (1-p_i(B))(1-\bar{p}_{-i}(B)),\label{eqn:match}
\end{align}
where $p^{i}_{m}(B) $ is the probability of observing a matching for worker $i$, when a random reference worker is drawn (uniformly) from his group. The above forms a system of quadratic equations in $\{p_i(B)\}_i$ when $\{p^{i}_{m}(B)\}_i$s are known. We then need to solve a perturbed quadratic equations for $\{p_i(B)\}_i$, with $\{p^{i}_{m}(B)\}_i$s being estimated via observations (Step 3 of (\texttt{Explore\_Crowd})). The following challenges arise for analysis: (1) it is hard to tell whether the solution for above quadratic equations is unique or not. (2) Solving a set of perturbed (error in estimating $\{p^{i}_{m}(B)\}_i$s) quadratic equations for each $B$ incurs heavy computations.\footnote{We do not claim this is impossible to do. Rather, analyzing the output from such a system of perturbed quadratic equations merits a further study.}

Instead, by observing the availability of relatively large and diverse population of crowd workers, we make the following mean filed assumption:
\begin{assumption}
For any worker $i$,
$\bar{p}_{-i}(B) \equiv \bar{p}_{g(i)}(B)~.$
\end{assumption}
That is one particular worker's expertise level does not affect the crowd's mean.
This is not a entirely unreasonable assumption to make, as the candidate pool of crowd workers is generally large. With above $p^{i}_{m}(B) $ then becomes 
$$p^{i}_{m}(B) = p_i(B)\bar{p}_{g(i)}(B)+ (1-p_i(B))(1-\bar{p}_{g(i)}(B)).$$
Averaging over $i \in G_{g(i)}$ we have:\\
$
\sum_{i \in G_{g(i)}} p^{i}_{m}(B)/|G_{g(i)}| =\bar{p}^2_{g(i)}(B) + (1-\bar{p}_{g(i)}(B))^2,
$ which is very similar to the matching equation we derived earlier on. Again we can solve for $\bar{p}_{g(i)}(B)$ as a function of $\sum_{i \in G_{g(i)}} p^{i}_{m}(B)/|G_{g(i)}|$. 
Plugging $\bar{p}_{g(i)}(B)$ back into Eqn. (\ref{eqn:match}), we obtain an estimate of $p^i(B)$ as follows:
\begin{align*}
p_i(B) = (p^{i}_{m}(B)+\bar{p}_{g(i)}(B)-1)/(2\bar{p}_{g(i)}(B)-1).
\end{align*}
 
Similar regret can be obtained -- the difference only lies in estimating $p_i(B_k)$s. Details can be found in \cite{fullversion_aaai17}.


\section{Conclusion}\label{sec:conclude}

We studied the sequential peer prediction mechanism for eliciting effort using posted price. We improve over status quo towards making the peer prediction mechanism for effort elicitation more practical: (1) we propose a posted-price and ``minimal'' sequential peer prediction mechanism with bounded regret. The mechanism does not require workers to report additional information, except their answers for assigned tasks. Further we show our learning results can generalize to the case when workers may not necessarily be fully rational, under a mean-filed assumption. (2) Workers exerting effort according to an informative threshold strategy and reporting truthfully is an equilibria that returns highest utility.  

\paragraph{Acknowledgement:} We acknowledge the support of NSF grant CCF-1301976.

\bibliographystyle{aaai}
\bibliography{myref,library}

\appendix

\onehalfspacing

\onecolumn
\section*{\LARGE{Appendix}}
\section{Randomized task assignment}

We explain on why we need a well structured random task assignment. First we make sure for each task, it has been assigned at least to two workers, so each of the assignment can serve as a peer evaluation for the other. Secondly for any pair of workers that share the same task, they also need to have distinct tasks, which is motivated by the mechanism (\texttt{DG13}). 

  A reader may notice that by simply assigning each task to \emph{all} worker both of above conditions will be satisfied satisfied. (For example, assign 4 tasks $\{1,2,3,4\}$ to all 4 workers, and when distinct tasks are needed for worker 1 and 2, we can compute using only task 3 and 4 for each worker respectively.) But instead we will make our assignment process random, which will help exclude the possibility of more complicated collusion strategies (e.g. such as colluding on subset of tasks but not on the rest, see the example below), especially when also considering we can randomly shuffle labels (or IDs) of both workers and tasks. Therefore  in this paper we only consider one type of collusion, that is if workers decide to contribute uninformative signals, they will report the same labels for all tasks. 

\paragraph{Example on more sophisticated collusion:} Suppose workers are assigned the same set of tasks with the same ID. And for simplicity we assume there are even number of tasks. Workers agree on the IDs of tasks, and they will agree on reporting -1 for odd number ID tasks, and +1 for even IDs. Then 
\begin{align*}
B_i(k) &= 1(L_i(k)=L_j(k)) - L^d_i \cdot L^d_j -\overline{L}^d_i \cdot \overline{L}^d_i\\
&=1-\frac{1}{2}\cdot\frac{1}{2}-\frac{1}{2}\cdot\frac{1}{2} =\frac{1}{2},
\end{align*}
which is the maximum score that can be achieved as 
\begin{align*}
 &1(L_i(k)=L_j(k))  \leq 1,\\
 &L^d_i \cdot L^d_j + \overline{L}^d_i \cdot \overline{L}^d_i \geq 1/2.
\end{align*}
To see this, denote by $x:=L^d_i, y:= L^d_j$ -- we know $0 \leq x,y \leq 1$. The following holds:
\begin{align*}
xy+(1-x)(1-y) \leq (x^2+y^2)/2 + ((1-x)^2+(1-y)^2)/2 = \frac{2x^2-2x+1}{2} + \frac{2y^2-2y+1}{2} \geq 1/2.
\end{align*}

\paragraph{A feasible assignment:}Now we demonstrate that such an assignment that meets all requirements can be achieved. For example, suppose we have $4$ workers, set $M=2, d=1, K=2$ and we assign 4 tasks $\{1,2,3,4\}$ each time as follows:
\begin{align*}
\text{Worker 1:}~\{1,2\},~\text{Worker 2:}~\{1,3\},~\text{Worker 3:}~\{2,4\},~\text{Worker 4:}~\{3,4\}~
\end{align*}
Not hard to verify the above assignment satisfies all constraints we enforced. More generally when we have $N$ workers, we can prepare $N$ tasks to assign for each time. Again set $M=2, d=1, K=2$. Denote the tasks received by worker $i$ as $t_i(1),t_i(2)$. Then the assignment can be adaptively decided as follows:
\begin{align*}
t_1(1) = &1,~t_1(2) = 2,~t_2(1) = 1, t_2(2) = 3,\\
t_i(1) =& t_{i-2}(2), t_i(2) = \min\{t_{i-1}(2)+1,N\}, \forall i > 2~.
\end{align*}
It is easy to verify the above assignment rule satisfies our requirements: each tasks is assigned at least twice; workers receiving the same task also receive different tasks; all tasks are assigned the same number of times; not all tasks are assigned to all workers. 


\section{Proof of Theorem \ref{bne:static}}
\begin{proof}
Denote by $\mathcal P_{+1}, \mathcal P_{-1}$ the priors for labels, and the probability of observing label +1 and -1 of each worker $i$ under effort level $e_i$ as 
$$
p_{+,e_i} := \Pr[L_i = +1|e_i] = \mathcal P_{+} p_{e_i} + \mathcal P_{-} (1-p_{e_i}),
$$
and $p_{-,e_i} := 1-p_{+,e_i}$. 
W.l.o.g. consider task $k$ of worker $i$ (when task $k$ is indeed assigned to worker $i$). Exerting effort or not on task $k$ will affect $u_i$ in two ways: 

First on $\mathbb E[B_i(k)]$: notice the decision on $k$ does not affect $\mathbb E[ L^d_i \cdot L^d_j]$. For $\mathbb E[ 1(L_i(k)=L_j(k))]$,
consider the fact that every other player is following the threshold policy that $e_j(k) = H$ if $c_j(k) \leq c^*$, and truthfully reporting. Then
\begin{align*}
\mathbb E[p_{e_j(k)}] &= F(c^*)p_H+(1-F(c^*))(1-p_L),~\\
\mathbb E[p_{+,e_j(k)}] &=F(c^*)p_{+,H}+(1-F(c^*))p_{+,L}.
\end{align*}
From which we have the utility difference between exerting and not exerting effort becomes: 
\begin{align*}
&\mathbb E[B_i(k)|e_i=H]  -\mathbb E[B_i(k)|e_i=L]\\
=& \mathbb E[p_{H} p_{e_j(k)} + (1-p_{H})(1-p_{e_j(k)})] \\
&~~~~- \mathbb E[p_{L} p_{e_j(k)} + (1-p_{L})(1-p_{e_j(k)})] \\
=&  (p_H-p_L) (2\mathbb E[p_{e_j(k)}]-1), ~
\end{align*}
where 
$$
\mathbb E[p_{e_j(k)}] = F(c^*)p_{H}+(1-F(c^*))p_{L}.
$$
Now consider the effect of $e_i(k)$ on $E[B_i(k')], k' \neq k$. Suppose after the randomized assignment, $k$ appears in $D \leq M-1$ other tasks' distinct set. Denote the set as $\mathcal D$. For $k' \in \mathcal D$, $e_i(k)$ affects the ``penalty term'': first we know by independence that 
\begin{align*}
\mathbb E[ L^d_i \cdot L^d_j] &= \mathbb E[ L^d_i]\cdot \mathbb  E[ L^d_j]~. 
\end{align*}
Then 
\begin{align*}
&\mathbb E[ L^d_i \cdot L^d_j|e_i(k)=H]- \mathbb E[ L^d_i \cdot L^d_j|e_i(k)=L] \\
 =& \mathbb E[ L^d_j] \biggl(\mathbb E[ L^d_i |e_i(k)=H]- \mathbb E[ L^d_i|e_i(k)=L]\biggr)\\
=&\mathbb E[p_{+,e_j}]\frac{p_{+,H}-p_{+,L}}{d}~.
\end{align*}
And similarly 
\begin{align*}
&\mathbb E[ \overline{L}^d_i \cdot \overline{L}^d_j|e_i(k)=H]- \mathbb E[ \overline{L}^d_i \cdot \overline{L}^d_j|e_i(k)=L] = (1-\mathbb E[p_{+,e_j}])\frac{p_{+,L}-p_{+,H}}{d}
\end{align*}
%
%
Summarize above difference we know:
\begin{align*}
&\mathbb E[u_i|e_i(k) = H]-\mathbb E[u_i|e_i(k)=L]=V_1 \cdot F(c^*)+V_2,
\end{align*}
where
\begin{align*}
&V_1:= 2(p_H-p_L)^2[1-\frac{D}{d}(\mathcal P_{+}-\mathcal P_{-})^2]\\
&V_2 := (2p_L-1) [1-\frac{D}{d}(\mathcal P_{+}-\mathcal P_{-})^2] (1-2\mathcal P_{-})^2~.
\end{align*}
The equilibrium equation establishes itself when the above equals to the cost $c^*$: (after re-arrange)
\begin{align}
B[V_1 \cdot F(c^*)+V_2] =c^*.\label{eqn:ne}
\end{align}
When $D$ is chosen such that $D \leq d$ we know $V_1, V_2>0$ (as $\mathcal P_{+}, \mathcal P_{-} > 0$).
We claim when $F(\cdot)$ is concave, there exists a unique solution if $p_L > 0.5$: first $\text{LHS}(c^*=0) > \text{RHS}(c^*=0)$, and when
\begin{align*}
&B[V_1 \cdot F(c^*=\bar{c})+V_2] \leq c_{\max},
\end{align*}
we have LHS of Eqn. (\ref{eqn:ne}) and the RHS intersects exactly once. So this unique intersecting point $c^* \leq c_{\max}$ is the unique solution to Eqn. (\ref{eqn:ne}), and o.w., we have $c^* \equiv c_{\max}$, that is $B$ is large enough so that exerting effort is always the best action to take. 


Also reporting by reverting the answer, i.e., $r_i = 0$, the probability of matching the true label becomes $1-p_H < p_L$, which leads to even less utility. So deviating from truthfully reporting is not profitable. 
\end{proof}

\section{Proof of Lemma \ref{collusion:price}}

\begin{proof}
This lemma is due to the fact that if everyone else is colluding to mis-lead the learner into believing a wrong price, a particular worker has no incentive to also do so: first his reported data will not affect his own price in the future. And as a rational worker, he should not care about the prices received by others. Due to the index rule we adopted, workers can do better than colluding: deviating from colluding to not exert effort possibly increases his current stage payment, when $p_L > 0.5$, and when cost $c_i$ is small enough:
\begin{align*}
&~~~~\mathbb E[B_i(k)|e_i=H] -\mathbb E[B_i(k)|e_i=L] \\
&=  (p_H-p_L) (2\mathbb E[p_{e_j(k)}]-1), \\
&=(p_H-p_L) (2p_{L}-1) > 0~.
\end{align*}

\end{proof}

\section{Lipschitz assumption on $\mathcal U(B)$}

Detailed argument for establishing the Lipschitz conditions can be similarly found in \cite{fullversion}. We briefly mention it here: when we use lower bound approximation for $\mathcal P(B)$ we have
\begin{align*}
\mathcal U(B) = 1-\text{exp}(-2(\bar{p}(B)-0.5)^2 M) - \eta \cdot B \cdot M \biggl[\mathcal P_{-}(2\bar{p}(B)-1)(\mathcal P_{+}(2\bar{p}(B)-1)+1) \biggr]. 
\end{align*}
The second part $\mathcal P_{-}(2\bar{p}(B)-1)(\mathcal P_{+}(2\bar{p}(B)-1)+1)$ is obtained by computing $ \mathbb E[B_i(k)]$. It is easy to see both $\text{exp}(-2(\bar{p}(B)-0.5)^2 N)$ and the second quadratic terms are Lipschitz in $\bar{p}(B)$. The rest remains to prove is that $\bar{p}(B)$ is also Lipschitz in $B$, as the composition of bounded Lipschitz functions are also Lipschitz. Since
\begin{align*}
\bar{p}(B):=F(c^*(B))p_H+[1-F(c^*(B))]p_L= (p_H-p_L)F(c^*(B)) + p_L,
\end{align*}
and as $F(\cdot)$ is concave and Lipschitz in $c$, we only need to prove that $c^*(B)$ is Lipschitz in $B$. The proof can be similarly established as Lemma 13.2 in \cite{fullversion} with similar assumptions, based on the equilibrium equation we characterized in the proof of Theorem \ref{bne:static}.

\section{Proof of Theorem \ref{thm:learn}}
\begin{proof}
First notice by triangle inequality we know
\begin{align*}
R(T) &= \sum_{t=1}^T |  \sum_{g \in \{1,-1\}} \omega_g \cdot \mathbb E[\mathcal U(B^{*}_{g,t})] - \mathcal U(B^*)|\\
&\leq \sum_{g \in \{1,-1\}} \omega_g \sum_{t=1}^T |\mathbb E[\mathcal U(B^{*}_{g,t})]-\mathcal U(B^*)|\\
&:=\sum_{g \in \{1,-1\}}\omega_g  R_g(T),
\end{align*}
i.e., the total regret is upper bounded by the weighted sum of each crowd's regret. Since the two learning processes for the two groups $G_1,G_{-1}$ parallel each other, we omit the $g$ super- or sub-script: we analyze the learning performance for each sub-group and the ones for the other one follows exactly the same. 

We separate the regret analysis into exploration regret and exploitation regret, that is the regret incurred at the exploration and exploitation stages respectively. We start with analyzing the exploitation regret. In order to characterize the exploitation regret, we need to characterize the estimation error for estimating each $U(B_k)$. First by mean value theorem we know $\exists p \in [\min\{p_m(B_k),\tilde{p}_{m,t}(B_k)\}, \max\{p_m(B_k),\tilde{p}_{m,t}(B_k)\}]$
\begin{align*}
|\bar{p}(B_k)-\tilde{p}_t(B_k)| &=\frac{|\sqrt{2p_m(B_k)-1}-\sqrt{2\tilde{p}_{m,t}(B_k)}|}{2} \\
&=\frac{1}{2\sqrt{1-2(1-p)}}|p_m(B_k)-\tilde{p}_{m,t}(B_k)|, \\
&\leq \frac{1}{2\sqrt{2p_L-1}}|p_m(B_k)-\tilde{p}_{m,t}(B_k)|.
\end{align*}
The inequality comes from the fact 
$p \geq \min\{p_m(B_k),\tilde{p}_{m,t}(B_k)\} \geq p_L > 0.5,
$ so the bound is well defined. \footnote{In the learning section we simply consider the case $p_L>0.5$. }  Now consider the estimation error in $p_m(B_k)$. At time $t$, an exploitation phase, there are $D(t)$ number of exploration/samples guaranteed, so the estimation for $p_m(B_k)$ at time $t$ satisfies:
\begin{align*}
\Pr&\biggl [|p_m(B_k)-\tilde{p}_{m,t}(B_k)| \geq \frac{1}{t^{\theta/2}} \biggr ] \leq 2\text{exp}(-2(\frac{1}{t^{\theta/2}})^2 t^{\theta}\log t) = \frac{2}{t^2}~.
\end{align*}
Then w.h.p., we have established that 
$$
|\bar{p}(B_k)-\tilde{p}_t(B_k)|  \leq \frac{ t^{-\theta/2}}{2\sqrt{2p_L-1}}.
$$
By Lipschitz condition we know
\begin{align*}
|\tilde{\mathcal U}(B_k)-\mathcal U(B_k)| \leq \frac{L_1 \cdot t^{-\theta/2}}{2\sqrt{2p_L-1}} ~.
\end{align*}
For any $B$ that falls in the same interval as $B_k$ we know:
\begin{align*}
|\mathcal U(B)&-\mathcal U(B_k)| \leq L_2  |B-B_k|\leq L_2 \bar{B}T^{-z}.
\end{align*}
Combine above arguments, we know for any $B$, we have the estimated utility function $\tilde{\mathcal U}(B_k)$ (the same interval as $B$) satisfy the follows
\begin{align*}
|\tilde{\mathcal U}(B_k)-\mathcal U(B)| \leq \frac{L_1 \cdot t^{-\theta/2}}{2\sqrt{2p_L-1}}+L_2 \bar{B}T^{-z}.
\end{align*}
Denote by
$B^*_t = \text{argmax}_{B_k, k=1,2,...,\lceil T^z \rceil} \tilde{\mathcal U}_t(B_k)~,
$
i.e., $B^*_t$ is the estimated optimal bonus level at time $t$ -- at any time $t$, by searching through all arm and we can find the one maximizes the utility function from the empirically estimated function the learner currently has. Then
\begin{align*}
&~~~~~\mathcal U(B^{*}_t) - \mathcal U(B^*) \\
&\geq \tilde{\mathcal U}(B^{*}_t)- \frac{L_1 \cdot t^{-\theta/2}}{2\sqrt{2p_L-1}} - \mathcal U(B^*)\\
&\geq \tilde{\mathcal U}(B^*_k) - \mathcal U(B^*)-   \frac{L_1 \cdot t^{-\theta/2}}{2\sqrt{2p_L-1}}\\
&\geq - \frac{L_1 \cdot t^{-\theta/2}}{2\sqrt{2p_L-1}}-L_2 \bar{B}T^{-z}~,
\end{align*}
where $B^*_k$ is the bonus level that is in the same interval as $B^*$. 
Combine with the fact $\mathcal U(\tilde{B}^*) \leq \mathcal U(B)$ we have
\begin{align*}
\mathcal U(B^*)-\mathcal U(B^*_t) \leq \frac{L_1 \cdot t^{-\theta/2}}{2\sqrt{2p_L-1}}+L_2 \bar{B}T^{-z}.
\end{align*}

Then the regret for exploration phases can be bounded as 
\begin{align*}
&\sum_{t=1}^T ( \frac{L_1 \cdot t^{-\theta/2}}{2\sqrt{2p_L-1}}+L_2 \bar{B}T^{-z}) + O(\sum_{t=1}^T \frac{2}{t^2})\leq \frac{L_1 \cdot C(\theta)}{2\sqrt{2p_L-1}}T^{1-\theta/2} + L_2 \cdot C(z) \bar{B} T^{1-z} + \text{const.} 
\end{align*}
where we have used the fact that for any $0 < \alpha < 1$, there exists a constant $C(\alpha)$ such that
$\sum_{t=1}^T \frac{1}{t^{\alpha}} \leq C(\alpha) T^{1-\alpha}~.
$
The total number of explorations are ($T^{\theta}\log T$ number of explorations needed for each of the $\lceil T^z \rceil$ arms)
\begin{align*}
\lceil T^z \rceil \cdot T^{\theta}\log T = \lceil T^{\theta+z}\log T \rceil~.
\end{align*}

Then the total accumulated regret incurred (exploration regret + exploitation regret) is bounded by
\begin{align*}
R(T)&  \leq  | \max \mathcal U - \min \mathcal U| \cdot\lceil T^{\theta+z}\log T \rceil  + \frac{L_1C(\theta)}{\sqrt{2p_L-1}}T^{1-\theta/2} + L_2  C(z)\bar{B} T^{1-z} + \text{const.} 
\end{align*}
The above regret achieves the optimal order when the three exponent term matches each other:
\begin{align*}
\theta + z = 1 - \theta/2, ~1-\theta/2 = 1 - z,
\end{align*}
which leads to the solution of $z=1/4, \theta=1/2$, which further results to the optimal order of regret $T^{3/4}$. 

\end{proof}

\subsection{The assumption that $p_L > 0.5$}

The assumption that $p_L>0.5$ is somewhat bothering. There are two places this assumption is needed. The first place is in proving the collusion proof for our learning mechanism (Lemma \ref{collusion:price}). The reason for this particular assumption is that when $p_L = 0.5$, the matching probability becomes independent of worker $i$'s labeling accuracy (as intuitively one's answer is compared to a random guess. So there is no reason to exert effort in such a case, regardless of workers' realized costs.)
This requirement can be relaxed by considering the fact a $0<\beta<1$ fraction of workers is independent of the collusion. 


The second place that such an assumption is needed is in bounding the estimation errors (when use the mean value theorem, we need a bounded graident). We hope to find other bounds to get around of the requirement $p_L > 0.5$.

\section{Performance for (\texttt{SPP\_PostPrice}) under non-game theoretical model}
\begin{theorem}
When run \emph{(\texttt{SPP\_PostPrice})} for the non-game theoretical model, $R(T)$ is bounded as:
\begin{align*}
R(T)& \leq   O(\lceil T^{\theta+z}\log T \rceil) +2C(\theta)L_1 (\frac{2}{(2p_L-1)^2}+\frac{1}{2p_L-1}) T^{1-\theta/2} + L_2 T^{1-z}+\text{const.}
\end{align*}
\label{thm:nongame}
\end{theorem}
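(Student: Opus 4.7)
The plan is to parallel the proof of Theorem \ref{thm:learn}, replacing the single per-group matching-based estimator by the individualized mean-field estimator
\[
\tilde{p}_{i,t}(B_k) \;=\; \frac{\tilde{p}^{i}_{m,t}(B_k) + \tilde{\bar{p}}_{g(i),t}(B_k) - 1}{2\tilde{\bar{p}}_{g(i),t}(B_k) - 1},
\]
where $\tilde{\bar{p}}_{g(i),t}(B_k)$ is recovered from the group-averaged matching probability by the same square-root inversion as in (\texttt{Explore\_Crowd}). The triangle inequality over the two groups, the decomposition into exploration and exploitation regret, the count of $\lceil T^{\theta+z}\log T\rceil$ exploration stages, and the discretization slack $L_2 T^{1-z}$ arising from Lipschitz-in-$B$ all carry over verbatim from the game-theoretic proof.

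The one genuinely new calculation is the estimation-error bound on $\tilde{p}_{i,t}(B_k)$. Treating $f(p_m, q) := (p_m + q - 1)/(2q - 1)$ as a function of two variables, I would compute
\[
\frac{\partial f}{\partial p_m} = \frac{1}{2q - 1}, \qquad \frac{\partial f}{\partial q} = \frac{1 - 2p_m}{(2q-1)^2},
\]
which, after the truncations $\tilde{\bar{p}}_{g(i),t}(B_k) \geq p_L$ and $\tilde{p}^g_{m,t} \geq 1/2$ enforced in (\texttt{Explore\_Crowd}), are uniformly bounded on the feasible region. A two-variable mean value argument then gives
\[
|p_i(B_k) - \tilde{p}_{i,t}(B_k)| \;\leq\; \frac{1}{2p_L - 1}\,|p^i_m(B_k) - \tilde{p}^i_{m,t}(B_k)| + \frac{1}{(2p_L-1)^2}\,|\bar{p}_{g(i)}(B_k) - \tilde{\bar{p}}_{g(i),t}(B_k)|.
\]
The first term is controlled by a Hoeffding bound on the $D(t) = t^\theta \log t$ paired samples, exactly as in Theorem \ref{thm:learn}; the second term picks up an additional factor of $1/(2\sqrt{2p_L-1})$ from the square-root-inversion step used to recover $\bar{p}_{g(i)}$ from the group-average matching probability, which under the mean-field assumption satisfies exactly the scalar identity in Eqn.~(\ref{infer:equilibria}). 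Both terms are therefore of order $t^{-\theta/2}$ w.h.p., so that the total estimation error on $p_i(B_k)$---and hence, via the Lipschitz assumption on $\mathcal U$, on $\tilde{\mathcal U}_t(B_k)$---is bounded by $L_1\bigl(\tfrac{2}{(2p_L-1)^2} + \tfrac{1}{2p_L-1}\bigr)\,t^{-\theta/2}$ up to constants.

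Once this per-stage bound is in hand, summing $\sum_{t=1}^T t^{-\theta/2} \leq C(\theta)\, T^{1-\theta/2}$ produces the second term of the claimed regret. The comparison $\mathcal U(B^*) - \mathcal U(B^{*}_{g,t}) \leq 2 \sup_k |\mathcal U(B_k) - \tilde{\mathcal U}_t(B_k)| + L_2 T^{-z}$ between the chosen arm and the true optimum is identical to the game-theoretic case, and the low-probability failure events contribute only a constant via $\sum_t 2/t^2$. The main obstacle is ensuring that the derivative bounds on $f$ remain valid throughout the run, which is exactly what the truncation steps in (\texttt{Explore\_Crowd}) are designed to guarantee; a secondary subtlety is that $\tilde{p}^i_{m,t}$ and $\tilde{\bar{p}}_{g(i),t}$ are not independent---worker $i$'s own pairings contribute to the group average---but this correlation only affects constants and is absorbed into the combined coefficient.
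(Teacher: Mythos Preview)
Your proposal is correct and follows essentially the same route as the paper: decompose into exploration and exploitation regret, bound the per-stage error $|p_i(B_k)-\tilde p_{i,t}(B_k)|$ via a perturbation argument on $(p_m,q)\mapsto (p_m+q-1)/(2q-1)$, feed the result through the Lipschitz assumption and the discretization slack, and sum. The only cosmetic difference is that you take partial derivatives of the two-variable map directly, whereas the paper adds and subtracts an intermediate term and then applies the one-variable mean value theorem to $x\mapsto 1/(2x-1)$; both yield the same $1/(2p_L-1)$ and $1/(2p_L-1)^2$ dependencies, and the paper additionally makes explicit the union bound over the $N$ workers (giving failure probability $O((N+1)/t^2)$), which you fold into the ``low-probability failure events'' remark.
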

\begin{proof}
Similar with the proof for Theorem \ref{thm:learn}, we are going to bound a number of estimation errors in evaluating $U(\cdot)$. First of all notice at time $t$, 
\begin{align*}
&|\frac{\sum_{i \in G_{g(i)}} p^{i}_{m}(B_k)}{|G_{g(i)}|}- \frac{\sum_{i \in G_{g(i)}} \tilde{p}^{i}_{m}(B_k)}{|G_{g(i)}|}|\leq \sum_{i \in G_{g(i)}}| p^{i}_{m}(B_k)- \tilde{p}^{i}_{m}(B_k)|/|G_{g(i)}|.
\end{align*}
Again when the algorithm enters exploitation phases (so $D(t)$ number of samples guaranteed for all arms) we have $\forall k$ by Chernoff bound that
\begin{align*}
\Pr\biggl [| p^{i}_{m}(B_k)- \tilde{p}^{i}_{m}(B_k)| \geq t^{\theta/2}\biggr] \leq 2/t^2,
\end{align*}
Then via union bound we know with probability being at least $1-\frac{N+1}{t^2}$($\forall i$):
\begin{align*}
&|\frac{\sum_{i \in G_{g(i)}} p^{i}_{m}(B_k)}{|G_{g(i)}|}- \frac{\sum_{i \in G_{g(i)}} \tilde{p}^{i}_{m}(B_k)}{|G_{g(i)}|}|\leq t^{-\theta/2}.
\end{align*}
Then we know the following holds:
\begin{align*}
&~~~~~|\tilde{p}_{i,t}(B_k)-p_i(B_k)|\\
& \leq |\frac{p^{i}_{m}(B_k)+\bar{p}(B_k)-1}{2\tilde{p}_t(B_k)-1}-p_i(B_k)|+|\frac{p^{i}_{m,t}(B_k)-p^{i}_m(B_k)+\tilde{p}_t(B_k)-\bar{p}(B_k)}{2\tilde{p}_t(B)-1}|\\
&\leq  |\frac{p^{i}_{m}(B_k)+\bar{p}(B_k)-1}{2\tilde{p}(B_k)-1}-p^i(B_k)| + \frac{2t^{-\theta/2}}{2p_L-1}\\
&\leq 2 \frac{p^{i}_{m}(B_k)+\bar{p}(B_k)-1}{(2p_L-1)^2}|\tilde{p}_{t}(B_k)-\bar{p}(B_k)|+ \frac{2t^{-\theta/2}}{2p_L-1} \\
&\leq \frac{2}{(2p_L-1)^2} \frac{L_1 \cdot t^{-\theta/2}}{2\sqrt{2p_L-1}}+\frac{2}{2p_L-1} t^{-\theta/2}\\
&=\frac{ t^{-\theta/2}}{2p_L-1}(\frac{L_1}{(2p_L-1)^{1.5}}+2)~.
\end{align*}
The second inequality uses the  concentration bound, as well as the fact that
$
\bar{p}(B) \geq p_L > \frac{1}{2}, \forall B~
$
so that the estimation should never go smaller than this quantity. The third inequality applies mean value theorem to function $1/(2x-1)$; while the last one uses fact $|p^{i}_{m}(B_k)+\bar{p}(B_k)-1| < 1$ and concentration bound. Then for any $B$, we have the estimated utility function $\tilde{\mathcal U}(B)$ satisfy the follows
\begin{align*}
|\tilde{\mathcal U}(B)-\mathcal U(B)| \leq &L_1 \frac{ t^{-\theta/2}}{2p_L-1}(\frac{L_1}{(2p_L-1)^{1.5}}+2)~.
\end{align*}
The rest of analysis is very similar to the one we presented for Theorem \ref{thm:learn}. We will not re-state the derivation details, but we are led to:
\begin{align*}
&\mathcal U(B^*_t) - \mathcal U(B^*)\geq -L_1  \frac{ t^{-\theta/2}}{2p_L-1}(\frac{L_1}{(2p_L-1)^{1.5}}+2)- L_2 \bar{B} \cdot T^{-z}~.
\end{align*}
And the exploitation regret accumulated up to time $t$ is bounded by
\begin{align*}
&\sum_t \biggl( L_1  \frac{ t^{-\theta/2}}{2p_L-1}(\frac{L_1}{(2p_L-1)^{1.5}}+2)+ L_2 \bar{B} \cdot T^{-z}\biggr)+\text{const.}\leq  \frac{C(\theta)L_1}{2p_L-1}(\frac{L_1}{(2p_L-1)^{1.5}}+2)T^{1-\theta/2}+ L_2C(z) T^{1-z}+\text{const.}
\end{align*}
And the exploration regret is upper bounded as
$O(T^z D(t)) = O(T^{z+\theta}\log T)$.
\end{proof}
\end{document}